\newtheorem{theorem}{Theorem}
\newtheorem{lemma}{Lemma}
\newcommand{\PP}{\mathbb{P}}
\newcommand{\E}{\mathbb{E}}
\newcommand{\R}{\mathbb{R}}
\newcommand{\la}{\lambda}
\newcommand{\La}{\mathcal{L}}
\newcommand{\ts}{\theta}
\newcommand{\s}{\sigma}
\newcommand{\al}{\alpha}
\newcommand{\sinr}{{\rm SINR}}
\begin{document}

\title{Successive Interference Cancellation\\ in Bipolar Ad Hoc Networks with SWIPT}

\author{Constantinos Psomas, \IEEEmembership{Member, IEEE}, and Ioannis Krikidis, \IEEEmembership{Senior Member, IEEE}\vspace{-4mm}
\thanks{C. Psomas and I. Krikidis are with the  KIOS Research Center for Intelligent Systems and Networks, University of Cyprus, Cyprus (e-mail: \{psomas, krikidis\}@ucy.ac.cy).}
\thanks{This work was supported by the Research Promotion Foundation, Cyprus, under the project FUPLEX with Pr. No. CY-IL/0114/02.}}

\maketitle

\begin{abstract}
Successive interference cancellation (SIC) is based on the idea that some interfering signals may be strong enough to decode in order to be removed from the aggregate received signal and thus boost performance. In this letter, we study the SIC technique from a simultaneous wireless information and power transfer (SWIPT) standpoint. We consider a bipolar ad hoc network and evaluate the impact of SIC on the SWIPT performance for the power splitting technique. Theoretical and numerical results show that our proposed approach can achieve significant energy gains and under certain scenarios the average harvested energy converges to its upper bound.
\end{abstract}

\begin{keywords}
SWIPT, power splitting, successive interference cancellation, stochastic geometry. \vspace{-2mm}
\end{keywords}

\section{Introduction}

\IEEEPARstart{E}{nergy} harvesting from radio frequency (RF) signals has been proposed as an alternative to conventional methods. In this context, simultaneous wireless information and power transfer (SWIPT) is a new communication paradigm, where a wireless device can obtain both information and energy from the received RF signals \cite{RUI}. Although information theoretic studies ideally assume that a receiver is able to decode information and harvest energy independently from the same signal \cite{GRO}, this approach is not feasible due to practical limitations. In order to achieve SWIPT, the received RF signal needs to be partitioned into two parts, one for information transfer and another for energy harvesting, which can be done in the time, power, antenna or space domain \cite{KRI}.

As the effect of interference plays a significant role in both information and energy transfer in wireless systems, SWIPT has been studied extensively in multi-user systems \cite{JP,NZ,KRI2,KHAN}. In this letter, we study SWIPT in a multi-user large-scale system and also take into consideration the spatial randomness of the users \cite{KRI2,KHAN}. The work in \cite{KRI2}, considers a SWIPT network with multiple source-destination pairs and the performance is derived under two protocols: non-cooperative and cooperative. In \cite{KHAN}, the impact of millimeter waves and blockages on SWIPT is investigated for cellular networks. The performance of SWIPT, in a large-scale network, is also highly affected by the multi-user interference which produces a strong trade-off between information decoding and harvested energy; that is, high levels of interference increase the harvested energy but decrease the performance of information decoding and vice versa. From the information decoding standpoint, many interference cancellation methods have been proposed in order to improve its performance; a well-known method is successive interference cancellation (SIC) \cite{SIC}, \cite{SIC2}. In SIC, the receiver attempts to decode the strongest interfering signals and, if successful, they are effectively removed thus increasing the signal-to-interference-plus-noise-ratio (SINR). The performance gains from the employment of SIC have been shown for cellular networks \cite{SIC} but also ad hoc networks \cite{SIC2}. However, SIC has never been used as a technique to achieve SWIPT.

In this letter, we study the employment of the SIC method in a bipolar ad hoc network where the receivers employ SWIPT with the power splitting (PS) technique. We show how each receiver can utilize SIC in order to boost the wireless power transfer without affecting the information decoding. Both analytical and numerical results are presented for the coverage probability and the average harvested energy before and after the employment of SIC. Our results demonstrate that SIC is significantly beneficial for SWIPT systems and we show that in certain scenarios the harvested energy converges to its upper bound, i.e. the case where all the power from the received signal is used for harvesting.\vspace{-1mm}

\begin{figure}[t]\centering
  \includegraphics[width=0.6\linewidth]{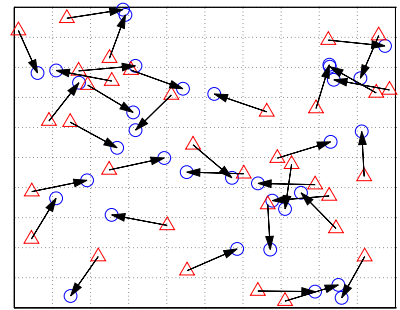}
  \caption{A snapshot of a bipolar ad hoc network; the receivers and transmitters are depicted by circles and triangles, respectively.}\label{model}\vspace{-5mm}
\end{figure}

\section{System model}

\subsection{Network \& channel model}
We consider a large-scale bipolar ad hoc wireless network consisting of a random number of transmitter-receiver pairs \cite[Ch. 12]{BACC}, \cite[Sec. 5.3]{HAE}. The bipolar model is ideal for modeling ultra-dense ad hoc networks, where devices such as smartphones, tablets, sensors, etc. will be pairwise connected and operating at the same frequency band; these devices can be considered as randomly spread. The transmitters form a homogeneous Poisson point process (PPP) $\Phi = \left\{x_i \in \R^2\right\}$ with $i \geq 1$ of density $\la$, where $x_i$ denotes the location of the $i$-th transmitter. Each transmitter $x_i$ has a unique receiver at a distance $d_0$ in some random direction. We consider a receiver located at the origin and its associated transmitter $x_0$. Based on Slivnyak's Theorem \cite{HAE}, we perform our analysis for this typical receiver but our results hold for any receiver in the network. Fig. \ref{model} schematically shows the considered network model. All nodes are equipped with a single antenna. The time is considered to be slotted and in each time slot all the transmitters are active without any coordination or scheduling. All wireless links suffer from both small-scale block fading and large-scale path-loss effects. The fading is considered to be Rayleigh distributed so the power of the channel fading is an exponential random variable with unit variance. We denote by $h_i$ the channel coefficient for the link between the $i$-th transmitter and the typical receiver. Moreover, all wireless links exhibit additive white Gaussian noise (AWGN) with variance $\sigma^2$. The path-loss model assumes the received power is proportional to $(1 + d_i^\al)^{-1}$ where $d_i$ is the Euclidean distance from the origin to the $i$-th transmitter and $\alpha > 2$ is the path-loss exponent. The probability density function of the distance $r$ from the origin to the $n$-th nearest neighbour is given by \cite[Sec. 2.9.1]{HAE},
\begin{align}f(r,n) = \frac{2(\pi \la)^n}{\Gamma(n)} r^{2n-1} e^{-\pi \la r^2}.\label{pdf}\end{align}

\subsection{Joint wireless information and power transfer model}
It is assumed that the transmitters have a continuous power supply, such as a battery or the power grid, and transmit with the same power $P_t$. Each receiver has SWIPT capabilities and employs the PS method such that the received signal is split into two parts: one is converted to a baseband signal for information decoding and the other is directed to the rectenna for energy harvesting and storage. Note that the time switching method is a special case of PS with a binary power splitting parameter and so it is not considered explicitly. Let $0 < v \leq 1$ denote the PS parameter for each receiver, i.e. $100v\%$ of the received power is used for decoding. The additional circuit noise formed during the RF to baseband conversion phase is modeled as an AWGN with zero mean and variance $\s^2_C$. Therefore, the signal-to-interference-plus-noise ratio (SINR) at the typical receiver can be written as
\begin{align}
\sinr_0 = \frac{v P_t h_0 \tau^{-1}}{v(\s^2 + P_t I_0) + \s^2_C},\label{sinr}
\end{align}
where $\tau = 1 + d_0^\al$ and $I_0 = \sum_{x_i \in \Phi} h_i (1+d_i^\al)^{-1}$, $i > 0$ is the aggregate interference at the typical receiver. The RF energy harvesting is a long term operation and is expressed in terms of average harvested energy \cite{KRI2}. As $100(1 - v)\%$ of the received energy is used for rectification, the average energy harvested at the typical receiver is expressed as
\begin{align}
E(v) = \zeta \,\E\left[(1-v)P_t\left(h_0 \tau^{-1} + I_0\right)\right],\label{avg_en}
\end{align}
where $0 < \zeta \leq 1$ denotes the conversion efficiency from the RF signal to DC voltage. Note that any RF energy harvesting from the AWGN noise is considered to be negligible.

\subsection{Successive interference cancellation model}\label{sic}

The receivers are assumed to apply the SIC\footnote{We assume an ideal SIC where interfering signals are cancelled effectively, i.e. no residual interference exists. Furthermore, power consumption related to SIC is not considered.} technique to cancel the $n$ strongest interfering signals, $n \in \mathbb{N}$. The central idea behind SIC is to attempt to decode the strongest interfering signals such that after a successful attempt, they can be removed from the aggregate signal at the receiver, which will provide a higher SINR for the useful signal. Specifically, the main steps of the SIC protocol are as follows \cite{SIC2}. The receiver tries to decode the useful signal from its associated transmitter; if successful, no SIC is employed. Otherwise, the receiver attempts to decode the strongest interfering signal and remove it from the received signal. The SINR is then re-evaluated and the receiver re-attempts to decode the useful signal. If the decoding of the useful signal is still unsuccessful, the receiver proceeds to decode the next strongest interfering signal and remove it from the received signal. This procedure repeats up to $n$ times, during which the receiver will either manage to decode the useful signal or after the $n$-th attempt it will be in outage. By assuming that the order statistics of the received interfering signal power are determined by the distance, i.e. $h_i (1+d_i^\al)^{-1} > h_j (1+d_j^\al)^{-1}$ for $i < j$, the above states can be expressed mathematically as
\begin{align*}
S_0 &: \frac{v P_t h_0 \tau^{-1}}{v(\s^2 + P_t I_0) + \s^2_C} \geq \ts, \\
S_n &: \left(\bigcap_{j=0}^{n-1}\frac{v P_t h_0 \tau^{-1}}{v(\s^2 + P_t I_j) + \s^2_C} < \ts\right) \\
&~~~ \bigcap \left(\bigcap_{j=1}^n \frac{v P_t h_j (1+d_j^\al)^{-1}}{v(\s^2 + P_t(I_j + h_0 \tau^{-1})) + \s^2_C} \geq \ts\right) \\
&~~~ \bigcap \left(\frac{v P_t h_0 \tau^{-1}}{v(\s^2 + P_t I_n) + \s^2_C} \geq \ts\right),
\end{align*}
where $I_j = \sum_{x_i \in \Phi} h_i (1+d_i^\al)^{-1}$, $i > j$. In contrast to existing works on SIC which ideally assume that the useful signal can be ignored during the decoding of interfering signals \cite{SIC}, \cite{SIC2}, in our approach, we consider it as an extra interference term.

\section{SWIPT with SIC}

In this section, we present the main results of our proposed technique. We derive analytically the probabilities for the main states of SIC given above and the expected harvested energy at the receiver. The performance of SIC is evaluated in terms of the coverage probability (also called success probability), that is, the probability that the $\sinr$ is above the target threshold $\theta$, i.e. $\mathbb{P}[\text{SINR}>\theta]$ \cite{SIC2}. In what follows, we assume that the interfering terms $I_i$, $i \geq 0$ are mutually independent \cite{SIC}.

\begin{lemma}\label{lemma1}
The coverage probability of a receiver which has not applied SIC is
\begin{align}
  \Pi_{\rm NC}(v) &= \exp\left(-\frac{\ts \tau}{P_t}\left(\s^2 + \frac{\s^2_{C}}{v}\right)\right) \nonumber\\ 
  &\times\exp\left(-\frac{2 \pi^2 \la}{\al} \ts \tau (1 + \ts\tau)^{\left(\frac{2}{\al}-1\right)} \csc\left(\frac{2 \pi}{\al}\right)\right).
\end{align}
\end{lemma}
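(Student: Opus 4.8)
The plan is to compute $\Pi_{\rm NC}(v) = \PP[\sinr_0 \geq \ts]$ directly from the definition in \eqref{sinr}. First I would rearrange the event $\{\sinr_0 \geq \ts\}$ into a statement about $h_0$: since $h_0$ appears linearly in the numerator and the interference $I_0$ sits in the denominator, the condition becomes
\begin{align*}
h_0 \geq \frac{\ts \tau}{P_t}\left(\s^2 + \frac{\s^2_C}{v} + P_t I_0\right).
\end{align*}
Because $h_0$ is exponentially distributed with unit mean and is independent of $I_0$, conditioning on $I_0$ and using $\PP[h_0 \geq t] = e^{-t}$ gives
\begin{align*}
\Pi_{\rm NC}(v) = \exp\left(-\frac{\ts\tau}{P_t}\left(\s^2 + \frac{\s^2_C}{v}\right)\right)\, \E\left[\exp\left(-\ts\tau I_0\right)\right],
\end{align*}
so the first exponential factor in the statement drops out immediately and the whole problem reduces to evaluating the Laplace transform $\La_{I_0}(s)$ of the aggregate interference at $s = \ts\tau$.

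Next I would compute $\La_{I_0}(s) = \E\big[\exp(-s\sum_{x_i\in\Phi} h_i(1+d_i^\al)^{-1})\big]$ using standard stochastic-geometry machinery. Conditioning on $\Phi$ and averaging over the i.i.d. exponential fades $h_i$ replaces each term by $\big(1 + s(1+d_i^\al)^{-1}\big)^{-1}$; then applying the probability generating functional of the PPP $\Phi$ of density $\la$ turns the product into
\begin{align*}
\La_{I_0}(s) = \exp\left(-2\pi\la\int_0^\infty \left(1 - \frac{1}{1 + s(1+r^\al)^{-1}}\right) r\, dr\right) = \exp\left(-2\pi\la\int_0^\infty \frac{s}{1 + r^\al + s}\, r\, dr\right).
\end{align*}
The remaining task is to show that this integral, at $s = \ts\tau$, equals $\frac{\pi}{\al}\ts\tau(1+\ts\tau)^{2/\al - 1}\csc(2\pi/\al)$, which then produces the second exponential factor in the lemma.

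The main obstacle is the evaluation of $\int_0^\infty \frac{r}{1 + r^\al + s}\, dr$ in closed form. I would handle it by the substitution $u = r^\al$ (so $r\,dr = \frac{1}{\al}u^{2/\al - 1}du$), turning it into $\frac{1}{\al}\int_0^\infty \frac{u^{2/\al - 1}}{(1+s) + u}\, du$; this is a standard Beta-function integral, $\int_0^\infty \frac{u^{a-1}}{c + u}\,du = c^{a-1}\,\frac{\pi}{\sin(\pi a)}$ for $0 < a < 1$ (valid here since $\al > 2$ forces $a = 2/\al \in (0,1)$), with $c = 1+s$. Substituting $a = 2/\al$ and $c = 1 + \ts\tau$ gives $\frac{1}{\al}(1+\ts\tau)^{2/\al - 1}\cdot\frac{\pi}{\sin(2\pi/\al)}$; multiplying by the prefactor $2\pi\la s = 2\pi\la\ts\tau$ yields exactly the claimed exponent. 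Combining the two exponential factors completes the proof. One should note the use of $\csc(2\pi/\al) = 1/\sin(2\pi/\al)$ in matching the stated form, and that the condition $\al > 2$ is precisely what guarantees convergence of the integral at infinity and the applicability of the Beta-integral identity.
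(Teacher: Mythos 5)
Your proposal is correct and follows essentially the same route as the paper: the first exponential factor from the tail of the exponential $h_0$ conditioned on $I_0$, then the Laplace transform of $I_0$ via the probability generating functional of the PPP, then a closed-form evaluation of the resulting integral. The only (immaterial) difference is the final step, where you substitute $u = r^\al$ and invoke the Beta-function identity $\int_0^\infty u^{a-1}(c+u)^{-1}du = c^{a-1}\pi\csc(\pi a)$, while the paper substitutes $u \leftarrow x^2$ and cites a Mellin-transform table entry; both give the same exponent under the same condition $\al > 2$.
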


\begin{proof}
Since $h_0$ is an exponential random variable with unit variance and $\Pi_{\rm NC}(v) = \PP\left[\sinr_0 > \ts\right]$, we have
\begin{align}
\PP\left[\sinr_0 > \ts\right] = \exp\left(-\frac{\ts \tau}{P_t}\left(\s^2 + \frac{\s^2_{C}}{v}\right)\right) \La_{I_0}\left(\ts \tau\right),\label{prob1}
\end{align}
where $\La_{I_0}(\ts \tau)$ is the Laplace transform of the random variable $I_0$ evaluated at $\ts \tau$ and is derived as follows
\begin{align}
&\La_{I_0}\left(\ts \tau\right) = \E\left[\exp\left(-\ts \tau \sum_{x_i \in \Phi} \frac{h_i}{1+d_i^\al} \right)\right] \nonumber \\
&= \E\left[\prod_{x_i \in \Phi} \exp\left(-\ts \tau \frac{h_i}{1+d_i^\al} \right)\right] \nonumber \\
&\stackrel{(a)}{=} \exp\left(-2 \pi \la \int_0^\infty \left(1-\frac{1}{1+ \ts \tau(1+x^\al)^{-1}}\right)x dx\right) \label{lap2} \\
&\stackrel{(b)}{=} \exp\left(-\frac{2}{\al} \la \pi^2 \ts \tau (1 + \ts \tau)^{\left(\frac{2}{\al}-1\right)} \csc\left(\frac{2 \pi}{\al}\right)\right),\label{lap1}
\end{align}
where $(a)$ follows from the probability generating functional of a PPP \cite{HAE} and from the moment generating function (MGF) of an exponential random variable since the $h_i$ are i.i.d.; $(b)$ by using the transformation $u \leftarrow x^2$ and the Mellin transform of the resulting expression \cite[17.41]{GRAD}. The result follows by substituting \eqref{lap1} into \eqref{prob1}.
\end{proof}

\begin{lemma}
The coverage probability of a receiver attempting to decode the $n$-th interferer is
\begin{align}
&\Pi_{\rm D}(v,n) = \int_0^\infty \frac{f(r,n)}{1 + \ts \xi \tau^{-1}} \exp\left(-\frac{\ts \xi}{P_t}\left(\s^2 + \frac{\s^2_{C}}{v}\right)\right) \nonumber\\
&\times\exp\left(\!-\frac{2\pi \la \ts \xi r^{2-\al}}{\al-2} {}_2F_{1}\left(\! 1, 1-\frac{2}{\al}; 2-\frac{2}{\al}; -\frac{1 + \ts \xi}{r^{\al}}\right)\right) dr,
\end{align}
where $\xi = (1 + r^\al)$ and $f(r,n)$ is given by \eqref{pdf} and ${}_2F_{1}(\cdot,\cdot;\cdot;\cdot)$ denotes the Gauss hypergeometric function.\vspace{-1mm}
\end{lemma}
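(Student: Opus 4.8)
Here is how I would attack the statement. The quantity $\Pi_{\rm D}(v,n)$ is, by the description of state $S_n$ in Section~\ref{sic}, the probability that the re-evaluated SINR for the $n$-th interferer exceeds $\ts$ once the $n-1$ stronger interferers have been cancelled and the useful signal $h_0\tau^{-1}$ is kept as an additional interference term, i.e.
\begin{align*}
\Pi_{\rm D}(v,n) = \PP\!\left[\frac{v P_t h_n (1+d_n^\al)^{-1}}{v\big(\s^2 + P_t(I_n + h_0 \tau^{-1})\big) + \s^2_C} \geq \ts\right].
\end{align*}
The plan is to condition on the distance to the $n$-th nearest transmitter, $\{d_n = r\}$, whose density is $f(r,n)$ from \eqref{pdf}, and set $\xi = 1 + r^\al$ so that $(1+d_n^\al)^{-1}=\xi^{-1}$. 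Rearranging the SINR inequality to isolate $h_n$ gives the equivalent event $h_n \geq \tfrac{\ts\xi}{P_t}\big(\s^2 + \tfrac{\s^2_C}{v}\big) + \ts\xi\, I_n + \ts\xi\, h_0\tau^{-1}$.

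Since $h_n$ is exponential with unit mean and independent of $h_0$ and $I_n$, taking the probability over $h_n$ first produces the product $\exp\!\big(-\tfrac{\ts\xi}{P_t}(\s^2+\tfrac{\s^2_C}{v})\big)\exp(-\ts\xi\, h_0\tau^{-1})\exp(-\ts\xi\, I_n)$. I would then average over $h_0$, which is again exponential with unit mean and independent, yielding the Laplace-transform factor $\E[\exp(-\ts\xi\, h_0\tau^{-1})] = (1 + \ts\xi\tau^{-1})^{-1}$ — exactly the prefactor appearing in the lemma — and average over $I_n$, which contributes $\La_{I_n}(\ts\xi)$ conditioned on $d_n=r$.

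The main step is evaluating $\La_{I_n}(\ts\xi\mid r)$. Conditioned on $d_n=r$, the interferers indexed $i>n$ all lie outside the disk $B(0,r)$ and, by the restriction property of a PPP, still constitute a homogeneous PPP of density $\la$ on $\R^2\setminus B(0,r)$; this conditional independence of the exterior process is the point that needs a careful word of justification. Applying the probability generating functional of a PPP together with the MGF of the i.i.d. exponential fading coefficients, exactly as in the proof of Lemma~\ref{lemma1} but with the spatial integral starting at $r$, gives
\begin{align*}
\La_{I_n}(\ts\xi\mid r) = \exp\!\left(-2\pi\la\,\ts\xi\int_r^\infty \frac{x}{1+\ts\xi+x^\al}\,dx\right).
\end{align*}

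What remains is the radial integral. Substituting $y = r/x$ (equivalently $t = x^\al$) converts $\int_r^\infty \tfrac{x}{1+\ts\xi+x^\al}\,dx$ into a Beta-type integral over $(0,1)$ which, matched against the Euler integral representation of ${}_2F_1$, equals $\tfrac{r^{2-\al}}{\al-2}\,{}_2F_1\!\big(1, 1-\tfrac{2}{\al}; 2-\tfrac{2}{\al}; -\tfrac{1+\ts\xi}{r^\al}\big)$; multiplying by $2\pi\la\,\ts\xi$ reproduces the exponent in the statement. Collecting the three factors — the $(1+\ts\xi\tau^{-1})^{-1}$ term, the noise exponential, and $\La_{I_n}(\ts\xi\mid r)$ — and de-conditioning by integrating against $f(r,n)$ over $r\in(0,\infty)$ yields the claimed expression. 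I expect the hypergeometric reduction of the radial integral to be the only genuinely non-routine manipulation, with the justification of the conditional PPP structure on the disk exterior being the one conceptual point to state precisely.
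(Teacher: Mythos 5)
Your proposal is correct and follows essentially the same route as the paper: condition on the distance $r$ to the $n$-th nearest interferer with density $f(r,n)$, treat the useful signal as an extra interference term averaged via its exponential MGF to get the $(1+\ts\xi\tau^{-1})^{-1}$ factor, and evaluate $\La_{I_n}(\ts\xi)$ via the PGFL over the PPP outside radius $r$, reducing the radial integral to the ${}_2F_1$ expression. The only difference is cosmetic: the paper cites a table integral (\cite[3.24]{GRAD}) for that last reduction, where you carry out the Euler/Beta-type substitution explicitly, and you make explicit the conditional-PPP (restriction) argument that the paper leaves implicit.
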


\begin{proof}
The proof follows similarly to the one of Lemma \ref{lemma1} with the only difference that the coverage probability is conditioned on the distance $r$ to the $n$-th neighbour. Furthermore, in this case the useful signal is treated as an interfering signal. Therefore, the SINR in this case is
\begin{align}
\sinr_{\rm D} = \frac{v P_t h_n (1+r^\al)^{-1}}{v(\s^2 + P_t(I_n + h_0 \tau^{-1})) + \s^2_C},
\end{align}
where $I_n = \sum_{x_i \in \Phi} h_i (1+d_i^\al)^{-1}$, $i > n$. Therefore, we have
\begin{align}
&\!\!\!\Pi_{\rm D}(v,n) = \PP[\sinr_{\rm D} > \ts ~|~ r] \nonumber \\
&\!\!\!\!=\!\int_0^\infty f(r,n) \exp\left(-\frac{\ts (1 + r^\al)}{P_t}\left(\s^2 + \frac{\s^2_{C}}{v}\right)\right) \nonumber\\
&\quad~~\times \E\left[\exp\left(-\ts (1 + r^\al)h_0\tau^{-1}\right)\right]\La_{I_n}\left(\ts (1+r^\al)\right)dr \nonumber\\
&\!\!\!\!=\!\int_0^\infty \!\frac{f(r,n)}{1 + \ts \xi \tau^{-1}} \exp\left(\!-\frac{\ts \xi}{P_t}\left(\s^2 + \frac{\s^2_{C}}{v}\right)\right) \La_{I_n}\left(\ts \xi\right)dr,\! \label{prob2}
\end{align}
which follows from the MGF of an exponential random variable and by setting $\xi = (1 + r^\al)$. The Laplace transform is evaluated as in Lemma \ref{lemma1} by setting the limits of the integral in \eqref{lap2} from $r$ to $\infty$ and using the expressions in \cite[3.24]{GRAD}.
\end{proof}

\begin{lemma}
The coverage probability of a receiver which has successfully canceled $n$ interferers is
\begin{align}
&\Pi_{\rm C}(v,n) = \exp\left(-\frac{\ts \tau}{P_t}\left(\s^2 + \frac{\s^2_{C}}{v}\right)\right) \int_0^\infty f(r,k) \nonumber \\
&\times\!\exp\left(\!-\frac{2\pi \la \ts \tau r^{2 - \al}}{\al-2} {}_2F_{1}\left(\!1, 1-\frac{2}{\al}; 2 - \frac{2}{\al}; -\frac{1 + \ts \tau}{r^\al}\right)\right) dr,
\end{align}
where $f(r,k)$ is given by \eqref{pdf} and ${}_2F_{1}(\cdot,\cdot;\cdot;\cdot)$ denotes the Gauss hypergeometric function.
\end{lemma}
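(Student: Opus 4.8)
The plan is to follow the template of Lemma~\ref{lemma1} and of the lemma immediately above. After the $n$ strongest interferers have been successfully decoded and subtracted, the receiver re-attempts to decode its own signal, so the relevant SINR is
\[ \sinr_{\rm C} = \frac{v P_t h_0 \tau^{-1}}{v(\s^2 + P_t I_n) + \s^2_C}, \qquad I_n = \sum_{x_i \in \Phi,\, i > n} \frac{h_i}{1+d_i^\al}, \]
i.e.\ the aggregate interference of \eqref{sinr} with the $n$ nearest transmitters removed. First I would condition on $r$, the distance from the origin to the $n$-th nearest transmitter, whose density is $f(r,n)$ from \eqref{pdf}; the key structural fact is the standard Poisson property that, given this distance, the remaining interferers form a homogeneous PPP of density $\la$ on the complement of the disc of radius $r$ centered at the origin.

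Since $h_0$ is exponential with unit mean, repeating the manipulation of \eqref{prob1} conditionally on $r$ yields
\[ \PP[\sinr_{\rm C} > \ts \mid r] = \exp\!\left(-\frac{\ts\tau}{P_t}\left(\s^2 + \frac{\s^2_C}{v}\right)\right) \La_{I_n}(\ts\tau \mid r), \]
and, exactly as in \eqref{lap2}, the conditional Laplace transform of $I_n$ follows from the probability generating functional of the PPP together with the MGF of the i.i.d.\ exponential fading coefficients, the only change being that the spatial integral now runs over the exterior region:
\[ \La_{I_n}(\ts\tau \mid r) = \exp\!\left(-2\pi\la \int_r^\infty \left(1 - \frac{1}{1 + \ts\tau(1+x^\al)^{-1}}\right) x\, dx\right). \]
Evaluating this truncated integral is the heart of the argument: unlike the full-plane integral in \eqref{lap1} it does not reduce to an elementary closed form, but after the substitution $u \leftarrow x^\al$ and the integral identity \cite[3.24]{GRAD} already used for the preceding lemma it collapses to the factor $\frac{2\pi\la\ts\tau r^{2-\al}}{\al-2}\,{}_2F_{1}\!\left(1, 1-\frac{2}{\al}; 2-\frac{2}{\al}; -\frac{1+\ts\tau}{r^\al}\right)$. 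Finally I would de-condition by integrating $\PP[\sinr_{\rm C} > \ts \mid r]$ against $f(r,n)$ over $r \in (0,\infty)$ and pull the $r$-independent exponential prefactor out of the integral, which gives the claimed expression for $\Pi_{\rm C}(v,n)$.

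I expect the only genuine obstacle to be the special-function bookkeeping in this last step: one must verify integrability as $x \to \infty$ (which is precisely where $\al > 2$ is used), perform the change of variables carefully, and match the resulting integral to the Gauss hypergeometric function with exactly the right parameters and argument. Everything else is a routine reuse of the stochastic-geometry tools already deployed in the two preceding lemmas, the single new ingredient being the standard conditioning-on-the-$n$-th-nearest-neighbour property of the Poisson process, which is also what produces the $f(r,n)$ weight in the integrand.
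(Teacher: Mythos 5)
Your proposal is correct and follows essentially the same route as the paper: the SINR is that of \eqref{sinr} with the interference restricted to transmitters beyond the $n$-th, and the coverage probability is obtained by conditioning on the distance $r$ to the $n$-th nearest interferer (weight $f(r,n)$, which the paper's statement misprints as $f(r,k)$), pulling out the exponential prefactor via the exponential fading of $h_0$, and evaluating the Laplace transform of the residual interference with the spatial integral truncated to $(r,\infty)$, which yields the Gauss hypergeometric factor exactly as in the preceding lemma. The paper compresses all of this into a one-line reference to the earlier proofs; you have simply written it out, including the correct observation that, unlike the interferer-decoding case, no extra $(1+\ts\xi\tau^{-1})^{-1}$ term appears since the useful signal is now the one being decoded.
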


\begin{proof}
The SINR in this case is as in \eqref{sinr} but with $I_0 = \sum_{x_i \in \Phi} h_i (1+d_i^\al)^{-1}$, $i > n$. The proof then follows the same steps as above.
\end{proof}
Note that when $n = 0$, i.e. no interferers have been cancelled, $\Pi_{\rm C}(v,n)$ is equal to the coverage probability when no SIC is applied, that is $\Pi_{\rm C}(v,0) = \Pi_{\rm NC}(v)$. We now provide the two main theorems of this letter, the coverage probability of a receiver with SIC capabilities and the receiver's average harvested energy.

\begin{theorem}
The coverage probability of a receiver which attempts to cancel up to $n$ interferers is
\begin{align}
&\!\!\!\!\Pi_{\rm SIC}(v,n) = \Pi_{\rm NC}(v) \nonumber\\
&\!\!\!\!+ \sum_{i=1}^n \left(\prod_{j=0}^{i-1} \left(1-\Pi_{\rm C}(v,j)\right)\right) \left(\prod_{j=0}^{i} \Pi_{\rm D}(v,j)\right) \Pi_{\rm C}(v,i).
\end{align}
\end{theorem}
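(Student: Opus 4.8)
The plan is to expand $\Pi_{\rm SIC}(v,n)$ over the mutually exclusive SIC states. First I would observe that the states $S_0,\dots,S_n$ of Section~\ref{sic} are pairwise disjoint: whenever $i<i'$, the definition of $S_{i'}$ contains the event ``the useful signal is not decoded after cancelling $i$ interferers'' (the $j=i$ term of its first intersection), whereas the last condition of $S_i$ is the complementary event that the useful signal \emph{is} decoded after cancelling $i$ interferers. Hence $\Pi_{\rm SIC}(v,n)=\PP\!\left[\bigcup_{i=0}^{n}S_i\right]=\sum_{i=0}^{n}\PP[S_i]$, and it remains to evaluate each $\PP[S_i]$. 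The $i=0$ term is exactly $\PP[\sinr_0\geq\ts]=\Pi_{\rm NC}(v)$ by Lemma~\ref{lemma1}.

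For $i\geq1$ I would split $S_i$ into the three families of events in which it is written: (i) the failures to decode the useful signal after cancelling $j=0,1,\dots,i-1$ interferers; (ii) the successful decodings of the consecutive strongest interferers leading to stage $i$; and (iii) the final successful decoding of the useful signal once $i$ interferers have been removed. Under the assumption stated just before Lemma~\ref{lemma1} that the interference terms $I_0,I_1,\dots$ are mutually independent --- which likewise decouples the useful-signal decoding events from the interferer decoding events --- these three families factor, and each factor coincides with an already-computed quantity. Indeed, each event in (i) is the complement of the coverage event underlying $\Pi_{\rm C}(v,j)$ (recall $\Pi_{\rm C}(v,0)=\Pi_{\rm NC}(v)$), so (i) contributes $\prod_{j=0}^{i-1}\bigl(1-\Pi_{\rm C}(v,j)\bigr)$; each event in (ii), after conditioning on the distance to the relevant nearest transmitter and averaging against \eqref{pdf}, is precisely a factor $\Pi_{\rm D}(v,j)$ of Lemma~2, giving a product of such terms; and (iii) is $\Pi_{\rm C}(v,i)$ of Lemma~3, that is, \eqref{sinr} with the aggregate interference $I_0$ replaced by its tail after the $i$ strongest contributions are removed. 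Multiplying these factors, summing over $i=1,\dots,n$, and adding the $i=0$ term produces the stated expression.

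The per-state evaluations themselves are routine, since they reduce verbatim to Lemmas~\ref{lemma1}, 2 and~3 once the decomposition is in place. The one delicate point --- and the place where the model's approximation really enters --- is the factorization of $\PP[S_i]$ into this product: the $I_j$ are nested partial sums over the \emph{same} point process and all the sub-events share the desired-signal power $h_0$, so these events are genuinely correlated; it is exactly the postulated mutual independence of the $I_j$ that lets state $S_i$ be treated as an independent chain of ``fail useful / decode interferer'' steps terminated by a success, and I would state this reliance explicitly at the point it is used. A secondary, purely bookkeeping, matter is keeping the index ranges of the two products consistent with the precise definition of $S_i$ --- which failure stages occur, and which interferer indices are decoded on the way to stage $i$.
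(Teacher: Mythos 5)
Your proposal is correct and follows essentially the same route as the paper, whose proof is a one-line appeal to the state decomposition of Section~\ref{sic} together with the assumed independence of the interference terms; you simply spell out the disjointness of the states, the factorization of each $\PP[S_i]$, and the identification of the factors with Lemmas~\ref{lemma1}, 2 and 3, and you correctly flag that the independence assumption (including neglecting the shared $h_0$) is exactly where the approximation enters. No gap beyond what the paper itself glosses over.
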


\begin{proof}
For analytical tractability, we assume the interference terms are independent. Then, the proof follows directly from the SIC protocol described in Section \ref{sic}.
\end{proof}

\begin{theorem}
The average harvested energy by the typical receiver is given by\vspace{-1mm}
\begin{align}
E(v) = \zeta (1 - v) P_t \left(\tau^{-1} + \frac{2}{\al} \pi^2 \la \csc\left(\frac{2 \pi}{\al}\right)\right).
\end{align}
\end{theorem}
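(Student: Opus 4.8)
The plan is to evaluate the expectation appearing in \eqref{avg_en} directly. Since $\zeta$, $P_t$ and $(1-v)$ are deterministic, the whole task reduces to computing $\E[h_0\tau^{-1} + I_0]$, and by linearity of expectation this splits as $\tau^{-1}\E[h_0] + \E[I_0]$.

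First I would dispatch the useful-signal term: $h_0$ is exponential with unit variance, hence unit mean, so $\E[h_0]=1$ and this part contributes $\tau^{-1} = (1+d_0^\al)^{-1}$. Next I would handle the aggregate interference $I_0 = \sum_{x_i\in\Phi} h_i(1+|x_i|^\al)^{-1}$, where the $h_i$ are i.i.d.\ unit-mean exponentials independent of $\Phi$. Applying Campbell's theorem (the first-moment formula for a PPP, with the interchange of expectation and the infinite sum justified by Tonelli since every summand is nonnegative) and then passing to polar coordinates gives $\E[I_0] = \la\int_{\R^2}\E[h]\,(1+|x|^\al)^{-1}\,dx = 2\pi\la\int_0^\infty r(1+r^\al)^{-1}\,dr$.

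The remaining one-dimensional integral I would evaluate via the substitution $u\leftarrow r^2$, which turns it into $\tfrac12\int_0^\infty (1+u^{\al/2})^{-1}\,du$; this is a standard Beta-type integral equal to $\tfrac{\pi}{\al}\csc(2\pi/\al)$ for $\al>2$, so $\E[I_0] = \tfrac{2\pi^2\la}{\al}\csc(2\pi/\al)$. Summing the two contributions and multiplying through by $\zeta(1-v)P_t$ yields the stated closed form.

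I do not anticipate a genuine obstacle: the only points needing care are the legitimacy of exchanging expectation with the sum over the point process --- immediate from nonnegativity --- and the convergence of $\int_0^\infty r(1+r^\al)^{-1}\,dr$ at infinity, which is precisely where the standing assumption $\al>2$ enters. I would also remark, consistently with the discussion preceding the theorem, that the result carries no dependence on the number of cancelled interferers: SIC acts only on the decoding path, so the rectenna still receives the full incident RF power and $E(v)$ is unaffected; letting $v\to 0$ then recovers the harvested-energy upper bound mentioned in the introduction.
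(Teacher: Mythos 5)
Your proposal is correct and follows essentially the same route as the paper: Campbell's theorem (with $\E[h_i]=1$) reduces the problem to $\tau^{-1}+\E[I_0]$, and the substitution $u \leftarrow r^2$ turns $2\pi\la\int_0^\infty r(1+r^\al)^{-1}dr$ into the standard tabulated integral giving $\frac{2}{\al}\pi^2\la\csc\left(\frac{2\pi}{\al}\right)$. The Tonelli justification and the remark that SIC leaves $E(v)$ unchanged are harmless additions, not deviations.
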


\begin{proof}
Similarly to \cite{KRI2}, using Campbell's theorem for the expectation of a sum over a point process \cite{HAE} and the fact that $\E(h_i)=1$ for all $i$, we get from \eqref{avg_en},
\begin{align}
E(v) = \zeta(1-v)P_t\left(\tau^{-1} + \E[I_0]\right),
\end{align}
where,\vspace{-2mm}
\begin{align}
\E[I_0] &= \E\left[\sum_{x_i \in \Phi} (1+d_i^\al)^{-1}\right] \nonumber \\
&= 2 \pi \la \int_0^\infty \frac{r}{1+r^\al} dr = \frac{2}{\al} \pi^2 \la \csc\left(\frac{2 \pi}{\al}\right), \nonumber
\end{align}
which follows from using the transformation $u \leftarrow r^2$ and \cite[3.241]{GRAD}.
\end{proof}

\begin{figure*}[t]
  \begin{minipage}{0.305\linewidth}\vspace{4mm}
    \includegraphics[width=\linewidth]{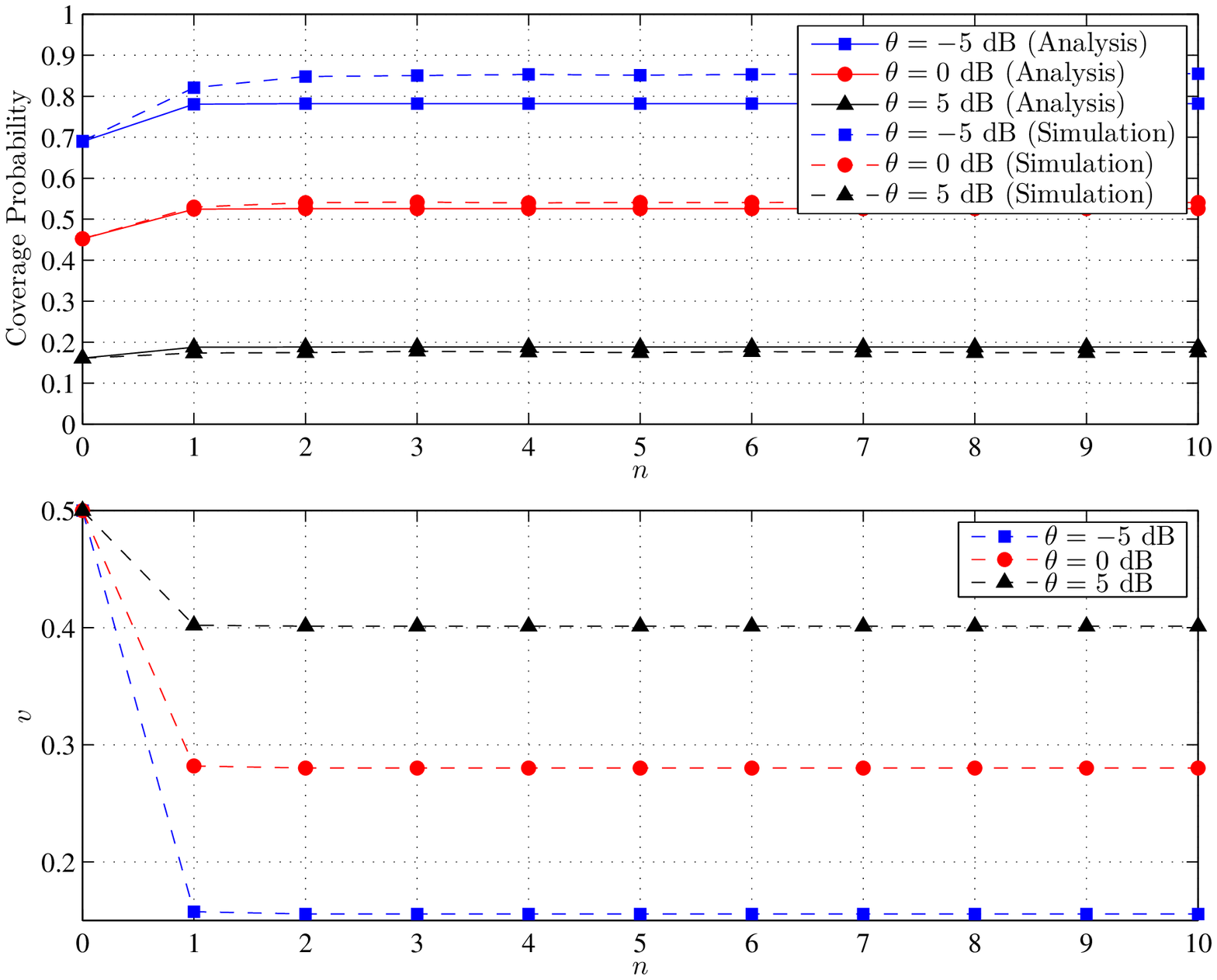}\vspace{1mm}
    \captionof{figure}{Coverage probability and $v$ versus the number of cancelled interferers $n$ for $\ts = \{-5, 0, 5\}$ dB.}\label{fig1}
  \end{minipage}\hfill
  \begin{minipage}{0.33\linewidth}
    \includegraphics[width=\linewidth]{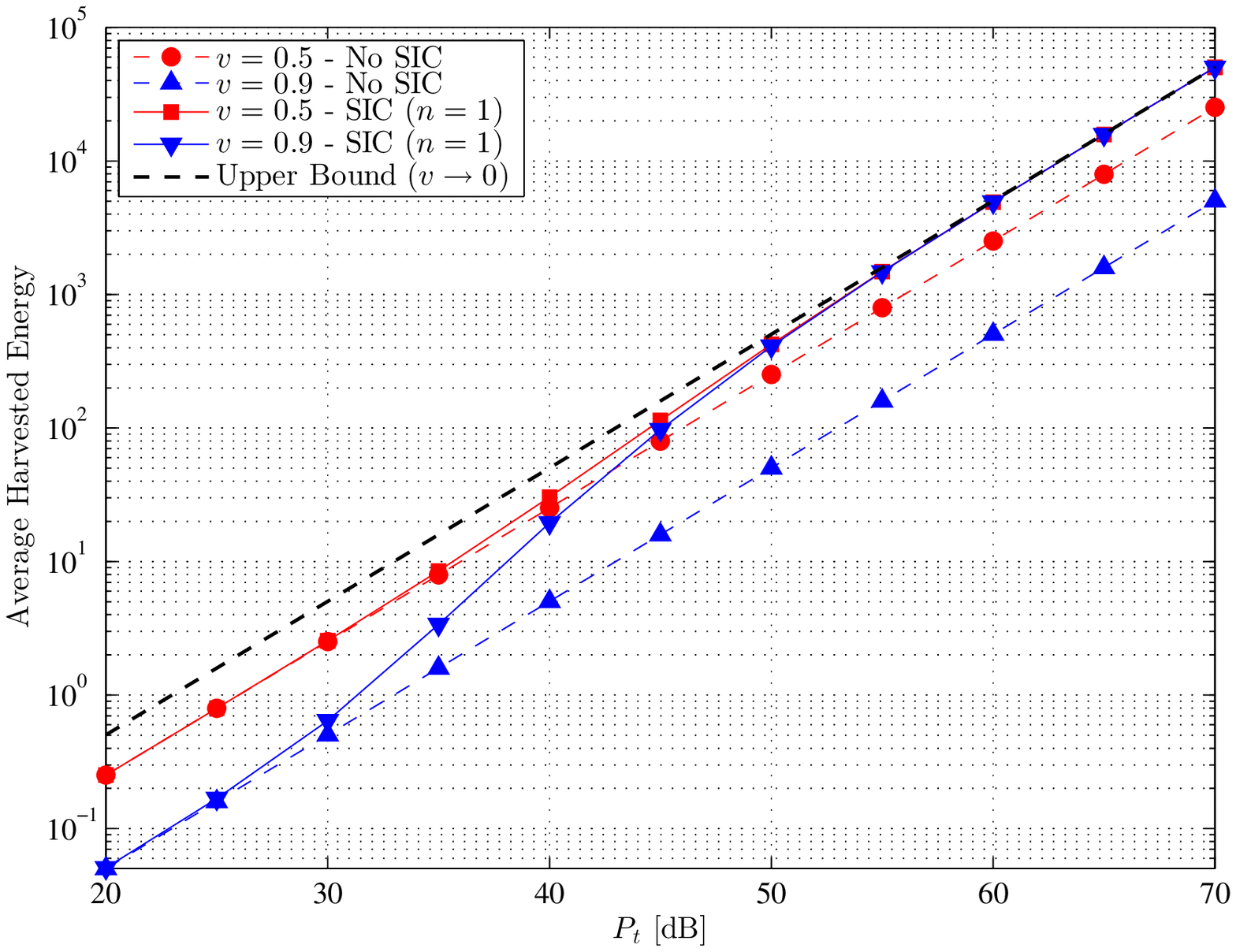}
    \captionof{figure}{Average harvested energy versus the transmit power $P_t$ for $v = \{0.5,0.9\}$.}\label{fig2}
  \end{minipage}\hfill
  \begin{minipage}{0.33\linewidth}
    \includegraphics[width=\linewidth]{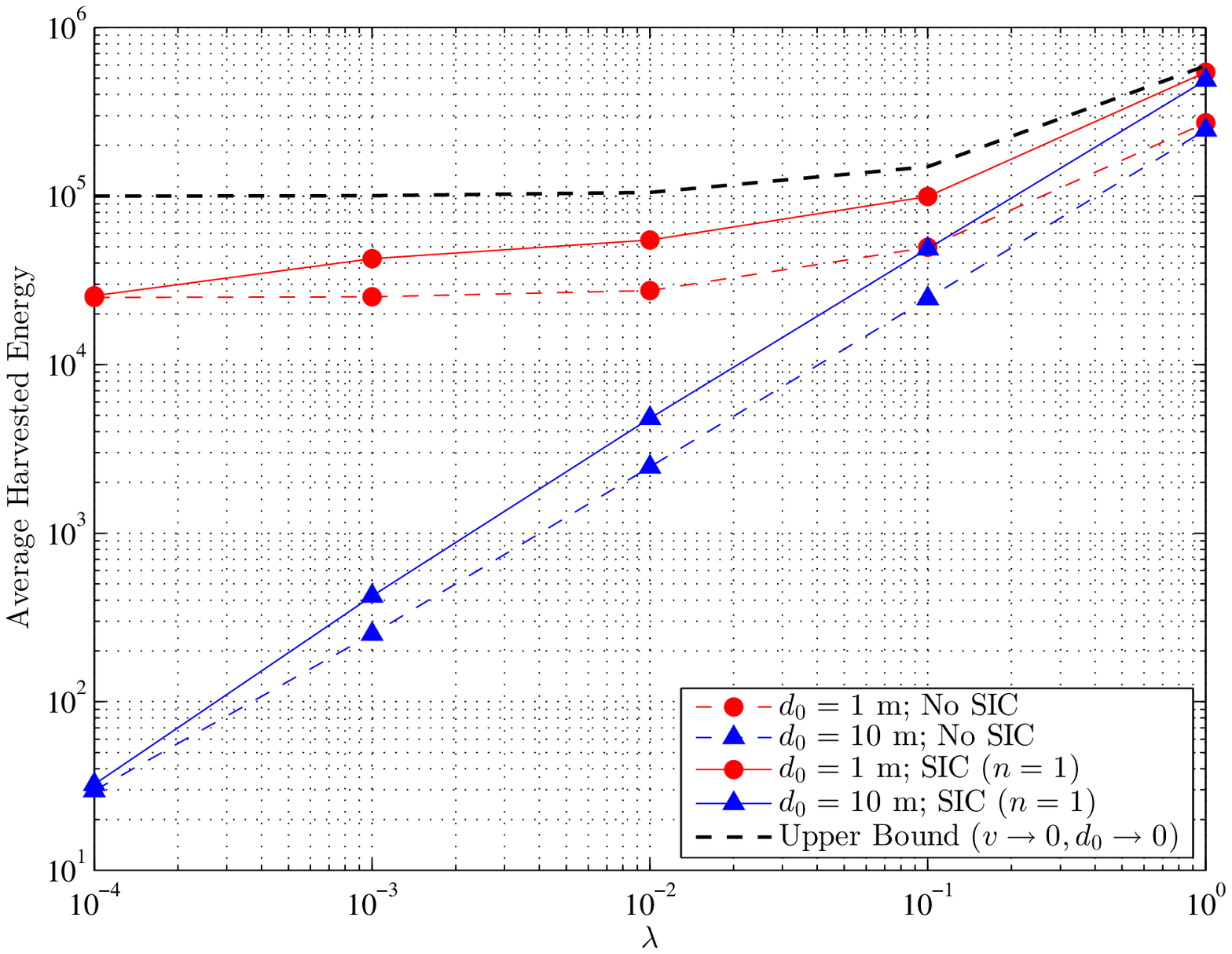}
    \captionof{figure}{\small Average harvested energy versus the density $\la$ for $d_0 = \{1, 10\}$ m.}\label{fig3}
  \end{minipage}\vspace{-3mm}
\end{figure*}

We now look at how SIC can be exploited in order to increase the average harvested energy. As the employment of SIC provides an improvement to the coverage probability, we can reduce the PS parameter in such a way so that the achieved performance is still as good as the case where SIC is not applied. The benefit from this method is that as the PS parameter decreases, more power is provided to the harvesting operation thus increasing the average harvested energy. In other words, we would like to compute the following
\begin{align}
& \max_{v} ~ E(v) \nonumber\\
& \text{subject to}\qquad 0 < v\leq 1, \nonumber\\
& \qquad\qquad\qquad \Pi_{\rm SIC}(v,n) \geq \eta,
\end{align}
where $\eta \geq \Pi_{\rm NC}(v)$ is the target outage probability. In order to maximize the average harvested energy at the receiver, it suffices to minimize $v$ subject to the above constraints. This can be found by finding the root of the equation $\Pi_{\rm SIC}(v,n) = \eta$. Due to the nonlinearity of the expression, the root can be found using numerical methods such as the bisection method.

\section{Numerical Results}

We provide both simulation and theoretical results in order to validate our proposed technique and demonstrate the potential benefits from its implementation. Unless otherwise stated, we use $\la = 10^{-3}$, $P_t = 50$ dB, $d_0 = 10$ m, $\ts = -5$ dB, $\s^2 = 1$, $\s^2_C = 1$, $v = 0.5$, $\zeta = 1$ and $\al = 4$. Furthermore, we set $\eta = \Pi_{\rm NC}(v)$.

Fig. \ref{fig1} depicts the coverage probability and the optimal PS parameter $v$ with respect to $n$, the maximum number of cancelled interfering signals. Note that the difference between the analytical and simulation results is due to the independence assumption between the interfering terms. However, it's clear that we can still capture the behaviour of the model and also this difference decreases as $\ts$ increases. Next, observe the validity of our proposed technique. For $n = 1$, the coverage probability increases for all values of $\ts$ and at the same time $v$ decreases, so more power is provided to the harvesting operation. The coverage gain for $n = 1$ is greater for lower values of $\ts$ which explains the larger reduction in $v$. Furthermore, for $n > 1$, SIC makes no significant difference to the coverage probability which is the expected behaviour, so $v$ remains unchanged as well.

Fig. \ref{fig2} shows the impact of the transmit power $P_t$ and SIC on the average harvested energy for different values of the PS parameter $v$. Clearly, the employment of our method provides significant energy harvesting gains. This is because an increase in $P_t$ provides better quality signal, which results to more power for harvesting. For $v = 0.9$, the harvesting gains are noticeable from $P_t = 25$ dB whereas for $v = 0.5$ from $P_t = 40$ dB. This is again due to the fact that there is more than enough power at the receiver to satisfy the performance threshold $\ts$ so the adjustment of $v$ starts from smaller $P_t$. Moreover, for large values of $P_t$, the energy harvested with the proposed method converges to the upper bound ($v \rightarrow 0$).

Finally, Fig. \ref{fig3} illustrates how the density of the network $\la$ and the distance $d_0$ effect the average harvested energy. As expected, the smaller the distance $d_0$ and the greater the density $\la$, the larger the average harvested energy is. Also, our proposed method provides energy harvesting gains which increase with $\la$ since in a more dense network the probability of decoding the strongest interfering signal is higher. Thus SIC increases the coverage probability which in turn provides energy harvesting benefits. For large values of $\la$ the harvested energy converges to the same value regardless of $d_0$ since in this case the interfering signals dominate the network and so the energy harvested from the direct link is insignificant. Lastly, in Fig. \ref{fig3} we also plot the upper bound of the average harvested energy for $v \rightarrow 0$ and $d_0 \rightarrow 0$; it's clear that our proposed method converges to the upper bound for large values of $\la$.

\section{Conclusion}
In this letter, we studied SIC in a bipolar ad hoc network from a SWIPT point of view. We showed that SIC can significantly increase wireless power transfer without affecting the information decoding. The potential energy harvesting gains from our proposed approach were illustrated with analytical and numerical results. Future extensions of this method include the investigation of the non-ideal case of SIC by taking into account residual interference and power consumption related to its implementation.


\begin{thebibliography}{10}
\bibitem{RUI} R. Zhang and C. K. Ho, ``MIMO broadcasting for simultaneous wireless information and power transfer,'' \emph{IEEE Trans. Wireless Commun.}, vol. 12, pp. 1989--2001, May 2013.

\bibitem{GRO} P. Grover and A. Sahai, ``Shannon meets Tesla: Wireless information and power transfer,'' in \emph{Proc. IEEE Int. Symp. Inf. Theory}, Austin, TX, June 2010, pp. 2363--2367.

\bibitem{KRI} I. Krikidis, S. Timotheou, S. Nikolaou, G. Zheng, D. W. K. Ng, and R. Schober, ``Simultaneous wireless information and power transfer in modern communication systems,'' \emph{IEEE Commun. Mag.}, vol. 52, pp. 104--110, Nov. 2014.

\bibitem{JP} J. Park and B. Clerckx, ``Joint wireless information and energy transfer in a $K$-user MIMO interference channel,'' \emph{IEEE Trans. Wireless Commun.}, vol. 13, pp. 5781--5796, July 2014.

\bibitem{NZ} N. Zhao, F. R. Yu, V. C. M. Leung, ``Opportunistic communications in interference alignment networks with wireless power transfer,'' \emph{IEEE Wireless Commun.}, vol. 22, pp. 88--95, Feb. 2015.

\bibitem{KRI2} I. Krikidis, ``Simultaneous  information  and  energy transfer in large-scale networks with/without relaying,'' \emph{IEEE Trans. Commun.}, vol. 62, pp. 900--912, March 2014.

\bibitem{KHAN} T. A. Khan, A. Alkhateeb, and R. W. Heath, ``Millimeter wave energy harvesting,'' \emph{IEEE Trans. Wireless Commun.}, to be published. [Online]. Available: http://arxiv.org/pdf/1509.01653v1.pdf

\bibitem{SIC} J. Blomer and N. Jindal, ``Transmission capacity of wireless ad hoc networks: Successive interference cancellation vs. joint detection,'' in \emph{Proc. IEEE Int. Conf. Commun.}, Dresden, Germany, Jun. 2009, pp.1--5.

\bibitem{SIC2} M. Wildemeersch, T. Q. S. Quek, M. Kountouris, A. Rabbachin, and C. H. Slump, ``Successive interference cancellation in heterogeneous networks,'' \emph{IEEE Trans. Commun.}, vol. 62, pp. 4440--4453, Dec. 2014.

\bibitem{BACC} F. Baccelli and B. Blaszczyszyn, \emph{Stochastic geometry and wireless networks, Volume II: Applications}, Foundations and Trends in Networking, vol. 4, no. 1--2, pp. 1--312, 2010.
  
\bibitem{HAE} M. Haenggi, \emph{Stochastic geometry for wireless networks}, Cambridge University Press, 2013.

\bibitem{GRAD} I. S. Gradshteyn and I. M. Ryzhik, {\it Table of integrals, series, and products}, Elsevier, 2007. 
\end{thebibliography}
\end{document}